\date{Revised on May 20, 2012}
\newtheorem{lemma}{Lemma}
\newtheorem{theorem}{Theorem}
\newtheorem{corollary}{Corollary}
\def\cz{\mathbb{C}} 
\def\rz{\mathbb{R}} 
\def\nz{\mathbb{N}} 
\def\gz{\mathbb{Z}} 
\newcommand{\cE}{\mathcal{E}}
\def\gp{\mathfrak{p}}
\def\gq{\mathfrak{q}}
\def\gA{\mathfrak{A}}
\def\gH{\mathfrak{H}}
\def\gQ{\mathfrak{Q}}
\def\rd{\mathrm{d}}
\def\ri{\mathrm{i}}
\def\eh{{1\over2}}
\def\d{\mathop{\partial}\nolimits}
\def\sgn{\mathop{\mathrm{sgn}}\nolimits}
\def\f#1#2{\phi_{#1,#2}}
\def\p#1#2{\psi_{#1,#2}}
\def\v#1#2#3{v^{#1}_{#2,#3}}
\begin{document}
\title[Stability of Graphene Impurites]{Stability of Impurities with Coulomb Potential in Graphene with
  Homogeneous Magnetic Field}

\author[t. Maier]{Thomas Maier}
\author[H. Siedentop]{Heinz Siedentop}
\address{Mathematisches Institut\\
 Ludwig-Maximilians-Universit\"at M\"unchen\\
 Theresienstra\ss e 39\\ 80333 M\"unchen\\ Germany}
 \email{maier@math.lmu.de  \textit{and} h.s@lmu.de}

\maketitle
\begin{center}
  \textit{Dedicated to Elliott H. Lieb on the occasion of his 80th birthday}
\end{center}
\begin{abstract}
  Given a 2-dimensional no-pair Weyl operator $W_Z$ with a point
  nucleus of charge $Z$, we show that a homogeneous magnetic field
  does not lower the critical charge beyond which it collapses.
 \end{abstract}

\section{Introduction\label{ein}}

Perfect graphene is modeled in continuum one-particle approximation by
a two-dimensional free Weyl operator (massless Dirac
operator). Non-perfect graphene has additional potentials; a
particular case of importance is the presence of an impurity of
Coulomb type (see the review of Castro Neto et
al. \cite{CastroNetoetal2009}). As opposed to non-relativistic
mechanics, in relativistic mechanics both kinetic energy and the
Coulomb potential energy have the same linear scaling for large
momenta which implies the existence of a critical coupling
constant. This explains the interest in the subject in the physics
literature, see, e.g., Pereira et al. \cite{Pereiraetal2007} and
Shyvtov et al. \cite{Shytovetal2007}. The critical coupling constant
as occurring in these papers can be mathematically thought of as that
coupling constant were a natural definition of self-adjointness ceases
to exist. In addition to the electric impurity potential it is often
also important to study the systems with an additional homogeneous
magnetic field perpendicular to the graphene sheet. Of course, the
question arises to what extend the presence of the magnetic field
changes the critical coupling constant.

If one is interested in multi-particle effects it is essential to have
a well defined multi-particle Hamiltonian (see \cite{Eggeretal2010}
and the references therein). Because of the Weyl operator's
unboundedness from above and below, a naive addition of the
one-particle operators acting on the various particles plus their
interactions -- as would be natural in non-relativistic quantum
mechanics -- does not give meaningful Hamiltonians (Brown and
Ravenhall \cite{BrownRavenhall1951}). This problem can -- on a
physical level -- be overcome by a quantum field theoretical
treatment. Approximately, one can use the no-pair Hamiltonians
initially introduced by Brown and Ravenhall \cite{BrownRavenhall1951}
and further developed by Sucher \cite{Sucher1980}. Because a
non-perturbative analytic treatment of quantum electrodynamics is not
available, we will concentrate on the second alternative.

A description of -- one-particle -- no-pair operators in an nutshell
is as follows: the state space on which the no-pair Hamiltonians are
defined depend on a Dirac sea in a similar way as the Fock
representation of the electron-positron field depends on the initial
splitting of the Hilbert space into electron and positron space (see
Thaller \cite[Section 10.1.1]{Thaller1992}). The Dirac sea is defined
through an orthogonal projection $(1-\Lambda)$ in the state space of
the Weyl operator $W_Z$, i.e., the Hilbert space
$L^2(\cz^2,\rz^2)$. The projection $\Lambda$ is assumed fixed. The
physically allowed states of Dirac particles will be those which are
orthogonal to the sea, i.e., they are eigenstates of
$\Lambda$. Metaphorically speaking the physical states are the vapor
above the Dirac sea. The no-pair approximation will then be the Weyl
operator projected onto the states of fixed particle number $N$ -- in
our case $N=1$ -- in the vapor, i.e., $B_z:=\Lambda W_Z\Lambda$.

It is reasonable to expect that expectations of the no-pair
Hamiltonians $B_Z$ are bounded from below if $Z$ is small (close to
zero) and is unbounded from below for large $Z$. The critical coupling
$Z_c$ constant is the value of the coupling constant where this change
of behavior occurs. A priori $Z_c$ can be expected to depend on the
choice of the Dirac sea. A particular simple choice is to take the
Dirac sea as the one defined by the Weyl operator with the external
homogeneous field. It is exactly this operator which we will be
interested in. Our goal is to show that $Z_c$ does not depend on the
presence of a homogeneous magnetic field.

Although the rest of the paper is on the one-particle level, our
interest in the one-particle stability stems from the multi-particle
stability: the multi-particle energy of the no-pair Hamiltonian is
bounded from below if and only if the corresponding one-particle
Hamiltonian is bounded from below (see \cite{Eggeretal2010}).

Our contribution is organized as follows: to escape the
inconclusiveness of heuristic arguments, we give a precise
mathematical formulation of the problem, collect some well known
relevant facts and state our result (Section \ref{sec:NotandForm}). To
prepare for the proof we study the partial wave analysis of underlying
energy form in Section \ref{sec:posspecpartwave}. In Section
\ref{sec:proof} we give the actual proof of our claim. The appendices
contain auxiliary material which we collect for the convenience
of the reader.

\section{Notation, Formulation of the Problem, and Main
  Result \label{sec:NotandForm}}

The Weyl operator (massless Dirac operator) of a particle of charge
$-e$ in two dimensions in a magnetic field
$\partial_1A_2-\partial_2A_1$ with vector potential $\gA:=(A_1,A_2)$
and an electric potential $\varphi$ is given by
\begin{equation}
 \label{eq:1}
 W_{\gA,\varphi}:=v\boldsymbol\sigma\cdot(\gp+\frac ec\gA)-e \varphi
\end{equation}
where $v$, $c$, and $e$ are positive constants. Depending on the
application, $v$ could be, e.g., the velocity of light or the Fermi
velocity in graphene, $c$ is the velocity of light, and
$\boldsymbol\sigma=(\sigma_1,\sigma_2)$ are the first two Pauli
matrices, i.e.,
$$
\sigma_1=
\begin{pmatrix}
 0&1\\
 1&0
\end{pmatrix},\
\sigma_2=
\begin{pmatrix}
 0&-\ri\\
 \ri&0
\end{pmatrix}.
$$
In this paper we are mainly interested in the case of a homogeneous
magnetic field of strength $B>0$ orthogonal to the $x_1$-$x_2$-plane,
i.e., $ \gA(x) = \tfrac B2(-x_2,x_1)$, and an electric field generated
by a nucleus of atomic number $Z$, i.e., $\varphi(x) = Ze/|x|$. (Note
that $B<0$ corresponds just to a reflection of the coordinates
$x\to-x$. For $B=0$ see Remark \ref{nummer} of Section
\ref{sec:NotandForm}.) This operator is to be self-adjointly realized
in $L^2(\rz^2,\cz^2)$.  Following Brown and Ravenhall
\cite{BrownRavenhall1951} -- see also Sucher
\cite{Sucher1980,Sucher1984} -- we will project these operators to the
orthogonal space of a Dirac sea.  More precisely, we are interested in
the quadratic form of $W_{\gA,\varphi}$ restricted to the positive
spectral subspace
$$\gH:=\{\psi\in L^2(\rz^2,\cz^2)\ |\  \psi=\Lambda\psi\}$$ 
with $\Lambda:= \chi_{(0,\infty)}(W_{\gA,0})$.

By dilation $W_{\gA,\varphi}$ is unitarily equivalent to $\sqrt{eB/(2c\hbar)}v
W_{e^2Z/(\hbar v)}$. Thus it suffices to study $W_{Z}:=
\boldsymbol\sigma\cdot(\gp+(-x_2,x_1)) - \frac{Z}{|x|}$ assuming that
$e=v=\hbar=1$. We use complex notation $z:= x_1+\ri x_2$, and,
correspondingly $\bar\d:= \tfrac12(\partial_1+\ri\partial_2)$ and
$\d:= \tfrac12(\partial_1-\ri\partial_2)$ and introduce $d:=
-2\ri(\d+\bar z/2)$ and $d^*:=2\ri(-\bar\partial+z/2)$. This allows us
to write more compactly
\begin{equation}
 \label{eq:3}
 W_Z =
 \begin{pmatrix}
   0& d\\
   d^*& 0
 \end{pmatrix}
 - \frac Z{|\cdot|}.
\end{equation}

We define $\gq_0$ as the linear span of the functions $\f mn$ defined
in \eqref{eq:6} of the Appendix. 
We also define $\gQ_0$ as the linear span of
the spinors $\p mn$ defined in \eqref{eq:5} of the Appendix. 
\begin{theorem}
 \label{theorem1}
 The quadratic form $(\psi,W_Z\psi)$ is positive on $\gQ_0$ and
 extends to a closed form $\cE$ on $\gQ$ which is bounded from below,
 if
 \begin{equation}
   \label{eq:4}
   Z \leq Z_c :=\left(\frac{\Gamma(\tfrac14)^4}{8\pi^2}+\frac{8\pi^2}{\Gamma(\tfrac14)^4}\right)^{-1}.
\end{equation}
For $Z>Z_c$ the form is unbounded from below.
\end{theorem}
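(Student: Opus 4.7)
The plan is to exploit the partial wave decomposition of $W_Z$ developed in Section \ref{sec:posspecpartwave}. In the basis $\{\p mn\}$ the form $\cE$ decouples across angular-momentum sectors, and in each sector labelled by $m$ it reduces to a quadratic form on sequences indexed by the Landau level $n$. The magnetic kinetic piece becomes a diagonal matrix $K_m$ whose entries come from the eigenvalues of $d^*d$ and $dd^*$ on the $\f mn$, while the Coulomb piece yields an explicit infinite matrix $C_m$ whose entries are integrals of $|x|^{-1}$ against pairs of $\f mn$.

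First I would compute the matrix elements $\langle \f mn,\, |x|^{-1} \f m{n'}\rangle$ explicitly via the Gaussian representation
\[
\frac{1}{|x|} = \frac{1}{\sqrt\pi}\int_0^\infty \frac{\rd t}{\sqrt t}\,e^{-t|x|^2}
\]
together with a Laguerre generating-function identity; the resulting Euler beta integrals exhibit $C_m$ as a Toeplitz-type object in a suitable logarithmic variable. Second, I would recast boundedness below of $\cE$ on the $m$-sector as the operator inequality $C_m \leq Z^{-1} K_m$, and diagonalize $K_m^{-1/2} C_m K_m^{-1/2}$ by a Mellin-type transform that reflects this hidden translation symmetry. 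The operator norm is then the supremum of an explicit symbol built from ratios of Gamma functions.

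Third, I would evaluate this supremum using a reflection symmetry $s \mapsto -s$ of the symbol: the extremizing parameter is the self-dual fixed point, at which the duplication and reflection identities collapse the Gamma expressions into $\Gamma(\tfrac14)^4/(8\pi^2)$, and the symmetric combination $a+a^{-1}$ appearing in $Z_c^{-1}$ emerges as the value of the symbol at its critical saddle. Taking the supremum over $m$ shows $Z_c$ is attained in the lowest non-trivial sector; closedness of $\cE$ on $\gQ$ for $Z<Z_c$ then follows by a standard KLMN argument, with $Z=Z_c$ handled by approximation. For $Z>Z_c$ I would establish unboundedness below by a scaling argument: dilating a trial spinor toward the origin renders the magnetic vector potential $(-x_2,x_1)$ subleading compared to $\gp$, reducing the form to the $B=0$ Weyl--Coulomb form which is known to collapse at the same $Z_c$.

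The main obstacle is the sharp-constant identification in the third step: showing that the supremum of the Gamma-function symbol is exactly $Z_c^{-1}$ rather than some more elementary quantity requires both the correct choice of spectral variable to uncover the reflection symmetry and the evaluation at the self-dual point, where the appearance of $\Gamma(\tfrac14)$ encodes the specifically two-dimensional structure of the magnetic no-pair operator (absent in the analogous three-dimensional problem).
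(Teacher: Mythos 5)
Your outline correctly identifies the right decomposition (pass to scalar wave functions via $\Phi$, partial-wave split, reduce to the extremal sector), and the fact that you expect the supremum in the lowest nontrivial sector matches the paper's Lemma \ref{lemma2} and Corollary \ref{cor2}. But there are two genuine gaps in the core of your argument.

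First, the proposed Mellin-type diagonalization does not apply here. In the non-magnetic case (Lieb--Yau in 3D, Bouzouina/Walter in 2D) the Coulomb kernel reduces to a function of the ratio of radial variables, i.e.\ a convolution operator on $L^2(\rz_+,\rd r/r)$, and the Mellin transform diagonalizes it, giving an explicit symbol whose supremum is the sharp constant. In the present magnetic problem the radial variable is replaced by the discrete Landau index $n$, and the matrix $v^{m,0}_{n,n'}$ from \eqref{eq:v0} is not Toeplitz: it is not a function of $n-n'$ in any coordinate, so there is no translation symmetry to exploit and no ``symbol'' whose supremum you can compute at a self-dual point. The paper instead uses the discrete variant of the Lieb--Yau device: an application of the Schwarz inequality $(a,v^\sigma a)\le \sum_n a_n^2 g_n^{-1}\sum_{n'}v^\sigma_{n,n'}g_{n'}$ with the specific weight sequences \eqref{g} and \eqref{eq:g1}, chosen so that the inner sums close in terms of Gamma functions and Gautschi's inequality controls the resulting ratios. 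This is an upper bound rather than a diagonalization; sharpness comes from the trial sequence on the other side. Your step ``diagonalize $K_m^{-1/2}C_mK_m^{-1/2}$ by a Mellin transform'' would need a justification that is not available, and it is the heart of the proof, not a technical detail.

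Second, the dilation argument for unboundedness breaks down. The positive spectral subspace $\gH=\Lambda L^2(\rz^2,\cz^2)$ is defined through the magnetic Weyl operator $W_{\gA,0}$, and dilations do not commute with $\Lambda$: if you scale a spinor in $\gH$ toward the origin, it leaves $\gH$, and you would have to reproject before evaluating the form, which destroys the clean reduction to the $B=0$ problem. Moreover, the equality of the magnetic and non-magnetic critical couplings is precisely the content of the theorem, so invoking ``the $B=0$ form is known to collapse at the same $Z_c$'' is circular. The paper instead takes explicit trial sequences $a_n=(n+1)^{-3/4}$, $n\le N$, inside the $m=0$ sector, computes both sides directly, and shows $\cE_0[a]=2(1-Z/Z_c)\log N + O(1)$, which diverges to $-\infty$ for $Z>Z_c$ without ever leaving the admissible space.

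Finally, you assert without argument that the extremum over $m$ is attained in the lowest sector; this requires the monotonicity statements \eqref{eq:pre2} and \eqref{eq:11}, which the paper proves from the explicit expression \eqref{eq:v0}, and the reduction to nonnegative sequences (positivity of all matrix elements). Your KLMN remark for closedness is fine in spirit but is secondary; the paper's route is to observe that positivity of the form makes $(u,w_Zv)+(u,v)$ a scalar product and to take the completion.
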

We remark:
\begin{enumerate}
\item Physically $\cE[\psi]$ is the energy of an electron in the state
  $\psi$ on top of the Fermi sea defined by $\gH^\perp$ in the quantum
  dot defined by the homogeneous magnetic field and an interstitial
  atom with charge $Z$.
\item If $Z\leq Z_c$, then the form $\cE$ defines -- according to
  Friedrichs \cite[Satz 3]{Friedrichs1934} -- a unique positive self-adjoint
  operator whose form domain includes $\gq_0$ and extends $\Lambda
  W_z$. It is called the no-pair Hamiltonian of one electron in the
  quantum dot.
\item For scalar type magnetic operators, like Schr\"odinger operators
  $(\gp-\gA)^2 +V$ or Chandrasekhar operators $|\gp-\gA|+V$, it is
  known that $\gA$ does not lower the ground state energy because of
  the diamagnetic inequality. For operators involving spin in an
  essential way like the Pauli operator this is known to be
  false. Although, in our case, we cannot expect the energy to
  increase when $\gA$ is turned on, our result shows, that the energy
  is not lowered dramatically, i.e., the critical coupling constant is
  not lowered. Thus, the boundedness result can be interpreted as a
  weak form of the diamagnetic inequality.
\item The result for $Z>Z_c$ means physically that the electron is
  pulled into the nucleus of the interstitial atom as the trial
  function of the proof will indicate.
\item The critical coupling constant in the three dimensional
  non-magnetic case with arbitrary non-negative mass was found by
  Evans et al \cite{Evansetal1996}. Tix \cite{Tix1997,Tix1998}
  sharpened the result to strict positivity with a lower bound linear
  in the mass.
\item \label{nummer} The critical coupling constant in the
  2-dimensional non-magnetic case was investigated by Bouzouina
  \cite{Bouzouina2002}. An error in the constant he obtained was
  corrected by Walter \cite{Walter2010}.
\item The 3-dimensional magnetic case -- for a rather big class of
  magnetic fields -- was treated by Matte and Stockmeyer
  \cite{MatteStockmeyer2009}. They showed that the critical constant
  is not lowered by an intricate resolvent method.  The generality of
  their result is paid for by the absence of an explicit lower bound
  on the energy. The bonus of our direct approach based on Lieb and
  Yau's \cite{LiebYau1988} strategy in the variant found in
  \cite{Evansetal1996} -- compared to transfering the methods of
  \cite{MatteStockmeyer2009} -- is our result on the positivity of the energy.
\item The numerical value of the critical coupling constant is $Z_c
  \approx 0.3780$ which is compared with the expected critical
  coupling constant $\tilde Z_c$ of the existence of a distinguished
  self-adjoint extension of the non-magnetic Weyl operator
  $W_Z$. Pereira et al \cite{Pereiraetal2007} and Shytov et al
  \cite{Shytovetal2007} suggest in physical language and using
  physical arguments that $\tilde Z_c=1/2$. Recently Warmt
  \cite[Satz 2.2.6]{Warmt2011} showed that this is indeed the case.
\end{enumerate}

\section{The positive spectral subspace and partial wave
 analysis \label{sec:posspecpartwave}}

The fact, that we are dealing with spinors in the positive spectral
subspace of $W_0$ allows us to reduce the problem to unrestricted
scalar wave functions (see \cite[Section 1.1]{Franketal2009} for the
three dimensional case).
\begin{lemma}
 \label{lemma3}
 The map
 \begin{equation}
   \label{eq:14}
   \begin{split}
     \Phi: L^2(\rz^2) & \to \gH\\ 
     u & \mapsto \frac1{\sqrt{2}} \begin{pmatrix}
       u\\
       d^*|d^*|^{-1}u
     \end{pmatrix}
   \end{split}
 \end{equation}
is unitary.

Furthermore, its restriction to $\gq_0$ is a unitary map from $\gq_0$ to
$\gQ_0$ with the associated scalar products.
\end{lemma}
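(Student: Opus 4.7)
The plan is to exploit the supersymmetric matrix form of $W_0$ together with the Landau-type spectral gap that a homogeneous magnetic field creates. First I would observe that a direct computation from $d=-2\ri(\d+\bar z/2)$ and $d^*=2\ri(-\bar\d+z/2)$ yields $[d,d^*]=4$, so that $dd^*=d^*d+4\geq 4$ on $L^2(\rz^2)$. Hence $|d^*|:=\sqrt{dd^*}$ has a bounded inverse, and the intertwining $d^*g(dd^*)=g(d^*d)d^*$ (valid for any Borel $g$ by the spectral theorem) gives
$\|d^*|d^*|^{-1}u\|^2=\langle|d^*|^{-1}(dd^*)|d^*|^{-1}u,u\rangle=\|u\|^2$.
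Thus $d^*|d^*|^{-1}$ is isometric from $L^2(\rz^2)$ onto $(\ker d)^\perp$.

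Next I would identify $\gH$ explicitly. Because $W_0^2=\mathrm{diag}(dd^*,d^*d)$ is a Landau-type operator with purely discrete spectrum, so is $W_0$. Its kernel sits in the lower component as $(0,\ker d)$ (the lowest Landau level), while a positive-energy eigenvector with eigenvalue $\lambda>0$ must satisfy $d\psi_2=\lambda\psi_1$ and $d^*\psi_1=\lambda\psi_2$, equivalently $\psi_1\in\ker(dd^*-\lambda^2)$ and $\psi_2=\lambda^{-1}d^*\psi_1=d^*|d^*|^{-1}\psi_1$. Summing over $\lambda>0$, $\psi\in\gH$ if and only if $\psi$ has the form $(\psi_1,d^*|d^*|^{-1}\psi_1)$ with $\psi_1\in L^2(\rz^2)$; orthogonality to $\ker W_0$ holds automatically since $d^*|d^*|^{-1}\psi_1\in\mathrm{range}(d^*)=(\ker d)^\perp$.

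Unitarity of $\Phi$ then follows immediately: $\|\Phi u\|^2=\tfrac12(\|u\|^2+\|d^*|d^*|^{-1}u\|^2)=\|u\|^2$ gives the isometry, and every $\psi\in\gH$ of the form above equals $\Phi(\sqrt 2\,\psi_1)$, giving surjectivity. For the restriction statement, I would read off from the Appendix that the $\f mn$ are simultaneous eigenfunctions of $dd^*$ and the angular momentum generator with explicit eigenvalues, and that the $\p mn$ are built precisely in the shape $\tfrac1{\sqrt2}(\f mn,\lambda_{m,n}^{-1}d^*\f mn)=\Phi\f mn$, so the restriction of $\Phi$ carries the algebraic span $\gq_0$ bijectively and isometrically onto $\gQ_0$. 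The one step that requires care is the bounded invertibility of $|d^*|$, and this is precisely where the magnetic field enters: the Landau commutator $[d,d^*]=4$ both produces the gap $dd^*\geq 4$ and forces $\ker d^*=\{0\}$ in $L^2(\rz^2)$.
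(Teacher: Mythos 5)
Your proof is correct and reaches the same conclusion, but the route is genuinely different from the paper's. The paper works entirely with the explicit Laguerre eigenbasis: it takes $\psi\in\gH$ orthogonal to $\mathrm{ran}\,\Phi$, deduces from the second component that $|d^*|^{-1}dv\perp\f mn$ for all $m,n$ by pairing against the explicitly known negative-energy eigenspinors, hence $dv=0$, and then notes that $(0,v)^t$ lies in $\ker W_0\subset\gH^\perp$. Your argument instead avoids the basis until the very last step. You compute $[d,d^*]=4$, deduce the spectral gap $dd^*\geq 4$ and $\ker d^*=\{0\}$, identify $d^*|d^*|^{-1}$ with the (everywhere-defined) polar-decomposition isometry onto $(\ker d)^\perp$, and then characterize $\gH$ directly via the supersymmetric structure of $W_0$: every positive eigenvector is of the form $(\psi_1,d^*|d^*|^{-1}\psi_1)$ with $\psi_1\in\ker(dd^*-\lambda^2)$, so summing over the discrete positive spectrum gives $\gH=\mathrm{ran}\,\Phi$. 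The trade-off: the paper's argument is short because it sits on top of the explicit Laguerre machinery already established in the appendix and will be reused elsewhere; your argument is self-contained at the operator-theoretic level, and it makes transparent exactly where the homogeneous magnetic field enters, namely through the commutator $[d,d^*]=4$ that both creates the gap and kills $\ker d^*$. Both are complete proofs; yours would be preferable in a presentation that does not want to set up the full Fock--Darwin basis first.

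One small caution if you write this up: the phrase ``so is $W_0$'' (discrete spectrum) deserves a word, e.g.\ that $W_0$ is self-adjoint and $W_0^2$ has compact resolvent, hence so does $W_0$; and ``$\mathrm{range}(d^*)=(\ker d)^\perp$'' (with equality, not just dense inclusion) uses that $d^*$ is bounded below by $2$, so its range is closed -- which you have from the gap, but it is worth flagging.
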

\begin{proof}
  First, we remark that $\Phi$ maps indeed to $\gH$. This holds, since
  $\gH$ is the closure of $\gQ_0$ in the $L^2$-norm.

  To show that $\Phi$ is surjective, assume $\psi=(u,v)^t \in \gH$ and
  orthogonal to
  $$\{(w,d^*|d^*|^{-1}w)^t|w\in L^2(\rz^2)\}.$$
  This implies
  $$ (u,w) + (|d^*|^{-1} dv,w) = ( u + |d^*|^{-1} dv, w) = 0$$
  for all $w\in L^2(\rz^2)$, i.e.,  $u= -|d^*|^{-1} dv$.

  Next we remark that
  $$
  \begin{pmatrix}
    -|d^*|^{-1} d\f mn\\
    \f mn
  \end{pmatrix}
  $$
  $n\in\nz_0$, $m\in\gz$ are eigenvectors with negative eigenvalue,
  namely $-2\sqrt {n+m_++1}$. Thus
  $$  \left(
    \begin{pmatrix}
      -|d^*|^{-1}d v\\v
    \end{pmatrix}
    , \begin{pmatrix}
      \f mn\\
      -d^*|d^*|^{-1}\f mn
    \end{pmatrix}\right) =0$$
  for all $n$ and $m$ which implies $(|d^*|^{-1}d v, \f mn)=0$,
  i.e., $dv=0$. Therefore, $\psi=(0,v)^t$. Such vectors are in the
  kernel of $W_0$, i.e., orthogonal to the positive spectral space, so
  that in the end $\psi=0$ is the only vector in the positive spectral
  space which is orthogonal to $\Phi(L^2(\rz^2,\cz^2))$.

  The identity $(u,v)_{L^2(\rz^2)}= (\Phi u, \Phi v)_\gH$ for all
  $u,v\in L^2(\rz)$ is immediate, as is the unitarity of the
  restriction.
\end{proof}
Using Lemma \ref{lemma3} we define the operator $w_Z:= \Phi^* W_Z\Phi$ on
$\gq_0$. The associated quadratic form on $\gq_0$ is
\begin{equation}
 \label{eq:18}
 (u,w_Z u) := (u,\Phi^* W_Z \Phi u)=(u,|d^*|u)-Z (u,Vu)
\end{equation}
with 
\begin{equation}
  \label{eq:18a}
  V=\frac12\left(\frac1{|\cdot|}+|d^*|^{-1}d \frac1{|\cdot|} d^*
    |d^*|^{-1}\right).
\end{equation}
\begin{corollary}
  \label{cor1}
  The operators $\Lambda W_Z$ on $\gQ_0$ and $w_z$ on $\gq_0$ are unitarily
  equivalent by Lemma \ref{lemma3}. In particular both operators and
  also the associated forms have all the same lower maximal bound.
\end{corollary}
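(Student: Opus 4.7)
The corollary is essentially a formal consequence of Lemma \ref{lemma3}; the task is only to convert the unitarity of $\Phi$ into an intertwining identity on the operator level and then read off the equality of the associated forms and of their lower bounds.

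My first step would be to derive the intertwining identity $\Lambda W_Z \Phi = \Phi w_Z$ on $\gq_0$. Since Lemma \ref{lemma3} tells us $\Phi : L^2(\rz^2)\to\gH$ is unitary and $\gH$ is the range of the spectral projector $\Lambda$, we have $\Phi^*\Phi = I_{L^2(\rz^2)}$ while $\Phi\Phi^* = \Lambda$ viewed as an operator on $L^2(\rz^2,\cz^2)$. Multiplying the definition $w_Z = \Phi^* W_Z \Phi$ on the left by $\Phi$ yields $\Phi w_Z = \Phi\Phi^* W_Z \Phi = \Lambda W_Z \Phi$. Because the second assertion of Lemma \ref{lemma3} says that $\Phi$ restricts to a unitary bijection $\gq_0 \to \gQ_0$, this identity is exactly the statement that $\Phi$ unitarily intertwines $w_Z$ acting on $\gq_0$ with $\Lambda W_Z$ acting on $\gQ_0$.

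The second step is to transfer this to the quadratic forms. For $u\in\gq_0$ and $\psi:=\Phi u\in\gQ_0$, one computes, using $\Phi^*\Phi=I$,
\begin{equation*}
  (\psi,\Lambda W_Z\psi)_{\gH} = (\Phi u,\Phi w_Z u)_{\gH} = (u,w_Z u)_{L^2(\rz^2)}.
\end{equation*}
Since $\Phi$ is norm preserving and maps $\gq_0$ onto $\gQ_0$, the Rayleigh quotients $(\psi,\Lambda W_Z\psi)/\|\psi\|^2$ for $\psi\in\gQ_0\setminus\{0\}$ and $(u,w_Z u)/\|u\|^2$ for $u\in\gq_0\setminus\{0\}$ run over exactly the same set of real numbers, so their greatest lower bounds coincide. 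This gives the claim both for the operators and for the associated forms.

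No genuine obstacle is expected here; the only subtlety worth flagging is the asymmetry between $\Phi^*\Phi = I$ on $L^2(\rz^2)$ and $\Phi\Phi^* = \Lambda \neq I$ on $L^2(\rz^2,\cz^2)$, which is precisely what replaces the naive conjugate $\Phi\Phi^* W_Z \Phi\Phi^*$ by the Brown--Ravenhall--Sucher operator $\Lambda W_Z\Lambda$ (and, since we restrict to $\gQ_0\subset\gH$ where $\Lambda$ acts trivially, by $\Lambda W_Z$) required for the statement.
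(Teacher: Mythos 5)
Your proposal is correct, and it simply makes explicit the reasoning that the paper leaves implicit (the paper states Corollary~\ref{cor1} without proof, as a direct consequence of Lemma~\ref{lemma3} and the definition $w_Z=\Phi^*W_Z\Phi$). The intertwining identity $\Phi w_Z = \Phi\Phi^* W_Z\Phi = \Lambda W_Z\Phi$ obtained from $\Phi\Phi^*=\Lambda$, together with the unitarity of $\Phi\restriction_{\gq_0}:\gq_0\to\gQ_0$, is exactly what the authors have in mind, and your passage to equality of Rayleigh quotients and hence of the infima is the standard and correct way to conclude.
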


Next we calculate the matrix elements of $w_Z$ in the orthonormal
basis given by the eigenfunctions $\f mn$ of $w_0$. First of all,
we remark that this matrix is diagonal in the angular momentum quantum
number $m$. We get for the matrix $t^m$ associated with $w_0$ the
following matrix elements
\begin{equation}
  \label{eq:t}
  t^m_{n,n'} := (\f mn, w_0 \f m{n'}) \delta_{n,n'}= 2\sqrt{n+m_++1} \delta_{n,n'}
\end{equation}
which is immediate from the eigenvalues equation \eqref{eigen}; for
the first summand of the potential $V$ (see \eqref{eq:18a}) we get the
matrix $v^{m,0}$ with matrix elements
\begin{equation}
  \label{eq:v0}
  \begin{split}
    v^{m,0}_{n,n'}=&  (\f mn, {1\over|\cdot|} \f m{n'})\\
    =&\frac1{\pi\sqrt{(n+1)_{|m|} (n'+1)_{|m|}}}
    \sum_{k=0}^{\min\{n,n'\}}{(k+1)_{|m|-\eh}\over(n-k+\eh)_\eh(n'-k+\eh)_\eh}
  \end{split}
\end{equation}
which is obtained by explicit calculation using the generating
function of the generalized Laguerre polynomials \cite[Formula
22.9.15]{Hochstrasser1965} and their recursion relations \cite[Formula
6.1.15]{Hochstrasser1965}.  (For convenience we use
Pochhammer's notation $(z)_a:=\Gamma(z+a)/\Gamma(z)$ [see also
\eqref{eq:poch}].)  Eventually, the second summand of
the potential $V$ yields the matrix $v^{m,1}$ with matrix elements
\begin{equation}
  \label{eq:v1}
  v^{m,1}_{n,n'}=  (d^*|d^*|^{-1}\f mn, {1\over|\cdot|}d^*|d^*|^{-1} \f m{n'})
  =
  \begin{cases}
    v^{m+1,0}_{n,n'} & \text{if } m\geq 0\\
    v^{m+1,0}_{n+1,n'+1} &\text{if } m<0.\\
  \end{cases}
\end{equation}
This can be obtained from \eqref{eq:v0} by observing that
\begin{equation}
 \label{eq:36}
 d^* \vert d^* \vert^{-1} \f mn= 
 \begin{cases} 
   \ \ \ri\f{m+1}n & \text{if } m  \ge 0,\\ 
   -\ri \f{m+1}{n+1}  & \text{if } m < 0. 
 \end{cases}
\end{equation}

Thus, the quadratic form $\cE_m$ of the matrix $(\f mn,w_Z\f m{n'})$,
for fixed angular momentum $m\in\gz$, on $l^2_0(\nz_0)$ -- the
subscript denotes sequences of compact support -- is given as
\begin{equation}
  \label{eq:9}
  \cE_m[a] = \sum_{n,n'=0}^\infty \overline{a_n}
  \left[ t^m_{n,n'} - \tfrac Z2 (v_{n,n'}^{m,0}+v_{n,n'}^{m,1})\right]a_{n'}.
 \end{equation}

 As mentioned in Appendix \ref{a1}, $(\f mn, w_Z\f{m'}{n'})= (\f mn,
 w_Z\f m{n'})\delta_{m,m'}$, i.e., both, potential and kinetic energy,
 are diagonal in $m$.  Thus,
\begin{equation}
 \label{eq:separariert}
 (u,w_zu) = \sum_{m\in\gz} \cE_m[a^m]
\end{equation}
where we write $a_n^m:=(\f mn,u)$ for the generalized Fourier
coefficients for $u\in\gq_0$ and where we collect those coefficients
with the same angular momentum quantum number $m$ and write
\begin{equation}
 \label{eq:27}
  a^m= (a_0^m,a_1^m,\ldots).
\end{equation}
Obviously, $(a_n^m)_{n\in\nz_0,\ m\in\gz}\in l_0^2(\nz\times\gz)$. 
\begin{lemma}
  \label{pre}
  The following facts for the matrix elements $v^{m,0}_{n,n'}$ of the
  Coulomb potential $1/|z|$ hold:
  \begin{itemize}
  \item For all $m\in \gz$ and $n,n'\in\nz_0$
    \begin{equation}
      \label{eq:pre1}
      0\leq v^{m,0}_{n,n'} =v^{|m|,0}_{n,n'}.
    \end{equation}
  \item For $m,n,n'\in \nz_0$
    \begin{align}
      \label{eq:pre2}
      v^{m,0}_{n,n'} &\geq v^{m+1,0}_{n,n'}.  
    \end{align}
  \end{itemize}
\end{lemma}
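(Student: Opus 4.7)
The plan is to work directly from the closed form \eqref{eq:v0} for $v^{m,0}_{n,n'}$. Both assertions of \eqref{eq:pre1} are essentially immediate inspection of that formula: the factors $(k+1)_{|m|-1/2}$, $(n-k+1/2)_{1/2}$, $(n'-k+1/2)_{1/2}$ are values of $\Gamma$ at strictly positive arguments and hence strictly positive, which gives the non-negativity of each summand and of the prefactor, and the right-hand side of \eqref{eq:v0} depends on $m$ only through $|m|$, giving the symmetry $v^{m,0}_{n,n'}=v^{|m|,0}_{n,n'}$.

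The real content is the monotonicity \eqref{eq:pre2}, and my strategy is a term-by-term comparison in the sum over $k$. Writing $\mu:=|m|=m$ for $m\in\nz_0$, the two Pochhammer identities
\[
(k+1)_{\mu+1/2}=(k+\mu+\tfrac12)\,(k+1)_{\mu-1/2},\qquad (n+1)_{\mu+1}=(n+\mu+1)\,(n+1)_\mu
\]
let me factor
\[
v^{m+1,0}_{n,n'}=\frac{1}{\pi\sqrt{(n+1)_\mu(n'+1)_\mu}}\sum_{k=0}^{\min\{n,n'\}}\frac{k+\mu+\tfrac12}{\sqrt{(n+\mu+1)(n'+\mu+1)}}\cdot\frac{(k+1)_{\mu-1/2}}{(n-k+\tfrac12)_{1/2}(n'-k+\tfrac12)_{1/2}}.
\]
Comparing with \eqref{eq:v0} for $v^{m,0}_{n,n'}$, the claim $v^{m+1,0}_{n,n'}\le v^{m,0}_{n,n'}$ reduces to the pointwise bound
\[
\frac{k+\mu+\tfrac12}{\sqrt{(n+\mu+1)(n'+\mu+1)}}\le 1\qquad\text{for all }0\le k\le\min\{n,n'\}.
\]

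This last inequality is elementary: since $k\le\min\{n,n'\}$,
\[
k+\mu+\tfrac12\le \min\{n+\mu+\tfrac12,\,n'+\mu+\tfrac12\}\le\sqrt{(n+\mu+\tfrac12)(n'+\mu+\tfrac12)}\le\sqrt{(n+\mu+1)(n'+\mu+1)},
\]
where the middle step is the trivial bound $\min(a,b)\le\sqrt{ab}$ for $a,b>0$ (equivalently AM--GM). Inserting this termwise into the sum proves \eqref{eq:pre2}.

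I do not foresee a real obstacle; the proof is a direct manipulation of the explicit formula. The only point requiring mild care is that the argument uses $|m+1|=|m|+1$, which is why the hypothesis of \eqref{eq:pre2} restricts $m$ to $\nz_0$; under this assumption everything reduces to the Pochhammer identities and one application of AM--GM.
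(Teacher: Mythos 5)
Your proof is correct and follows essentially the same route as the paper: reduce monotonicity in $m$ to a term-by-term comparison in the $k$-sum of the explicit formula \eqref{eq:v0}, which after cancelling common factors comes down to showing $(k+m+\tfrac12)\le\sqrt{(n+m+1)(n'+m+1)}$ for $k\le\min\{n,n'\}$. You spell out this last elementary step (via $\min(a,b)\le\sqrt{ab}$) more carefully than the paper, which merely invokes the functional equation of the Gamma function — and in doing so you also implicitly correct a typo in the paper's displayed inequality, where a $\Gamma(n'+m+1)$ should read $\Gamma(n'+m+2)$.
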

\begin{proof}
  The first claim -- including the remarkable positivity of all matrix
  elements -- is immediate from the explicit expression
  \eqref{eq:v0}.

 The second claim, i.e., monotony of the matrix elements in $m$,
 follows again from \eqref{eq:v0}, if
 $$ {\Gamma(k+m+3/2)\over \sqrt{\Gamma(n+m+2)\Gamma(n'+m+1)}}
 \leq  {\Gamma(k+m+1/2)\over \sqrt{\Gamma(n+m+1)\Gamma(n'+m+1)}}
$$
for $k\leq n,n'$. This is immediate from the functional
equation of the Gamma function.
\end{proof}
\begin{lemma}
 \label{lemma2}
 We have
 \begin{equation}
   \label{eq:11}
   0\leq(\f mn,V\f m{n'})\leq (\f 0n,V \f0{n'})
 \end{equation}
 for $n,n' \in \nz_0$ and $m \in \gz$.
\end{lemma}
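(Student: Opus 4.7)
The plan is to split on the sign of $m$ and to reduce everything to Lemma~\ref{pre} together with a single base-case monotonicity. The positivity $0\le(\phi_{m,n},V\phi_{m,n'})$ is immediate from \eqref{eq:18a}, \eqref{eq:v1} and \eqref{eq:pre1}, since both $v^{m,0}_{n,n'}$ and $v^{m,1}_{n,n'}$ are non-negative. For the upper bound when $m\ge 0$, identity \eqref{eq:v1} gives $(\phi_{m,n},V\phi_{m,n'})=\tfrac12\bigl(v^{m,0}_{n,n'}+v^{m+1,0}_{n,n'}\bigr)$, and iterated application of the monotonicity \eqref{eq:pre2} dominates the two summands by $v^{0,0}_{n,n'}$ and $v^{1,0}_{n,n'}$, respectively.

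For $m\le-1$, writing $k:=|m|\ge 1$ and combining \eqref{eq:pre1}, \eqref{eq:v1} yields $(\phi_{m,n},V\phi_{m,n'})=\tfrac12\bigl(v^{k,0}_{n,n'}+v^{k-1,0}_{n+1,n'+1}\bigr)$. Iterated use of \eqref{eq:pre2} bounds these by $v^{1,0}_{n,n'}$ and $v^{0,0}_{n+1,n'+1}$, so the upper bound will follow once I establish the base-case monotonicity
\[
v^{0,0}_{n+1,n'+1}\le v^{0,0}_{n,n'}\qquad(n,n'\in\nz_0),
\]
which I expect to be the principal obstacle.

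To prove it I pass to generating functions. Rewriting \eqref{eq:v0} in terms of the central binomials $C_j:=\binom{2j}{j}/4^j$ yields the triple-convolution expression $v^{0,0}_{n,n'}=\sqrt{\pi}\sum_{k=0}^{\min(n,n')}C_kC_{n-k}C_{n'-k}$, whose two-variable generating function is $\sqrt{\pi}\bigl[(1-z)(1-w)(1-zw)\bigr]^{-1/2}$. Multiplying by $1-zw$ and extracting the coefficient of $z^Nw^{N'}$ shows that the claimed monotonicity is equivalent to
\[
C_NC_{N'}\le\sum_{k=1}^{\min(N,N')}c_kC_{N-k}C_{N'-k}\qquad(N,N'\ge 1),
\]
where $c_k>0$ are the positive coefficients in $1-\sqrt{1-z}=\sum_{k\ge 1}c_k z^k$.

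Finally, I will establish this combinatorial inequality by Chebyshev's sum inequality. The identity $\sqrt{1-z}\cdot(1-z)^{-1/2}=1$ provides the convolution recurrence $C_N=\sum_{k=1}^Nc_kC_{N-k}$ for $N\ge 1$. Taking $N\le N'$ without loss, the two finite sequences $(C_{N-k})_{k=1}^N$ and $(C_{N'-k})_{k=1}^N$ are both increasing in $k$ since $C_j$ is decreasing in $j$, so Chebyshev with the positive weights $c_k$ gives
\[
\Bigl(\sum_{k=1}^Nc_k\Bigr)\sum_{k=1}^Nc_kC_{N-k}C_{N'-k}\ge\Bigl(\sum_{k=1}^Nc_kC_{N-k}\Bigr)\Bigl(\sum_{k=1}^Nc_kC_{N'-k}\Bigr).
\]
By the recurrence the first factor on the right equals $C_N$, and the second is at least $C_{N'}\sum_{k=1}^Nc_k$ because $C_{N'-k}\ge C_{N'}$ for all $k\ge 0$; cancelling the common positive factor yields the desired inequality.
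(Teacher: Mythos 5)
Your reduction is identical to the paper's: positivity from \eqref{eq:pre1}; for $m\ge 0$ use \eqref{eq:v1} and iterate \eqref{eq:pre2}; for $m\le -1$ use \eqref{eq:pre1} and \eqref{eq:v1} to land on $\tfrac12\bigl(v^{1,0}_{n,n'}+v^{0,0}_{n+1,n'+1}\bigr)$; in both cases everything funnels into the same base-case monotonicity $v^{0,0}_{n+1,n'+1}\leq v^{0,0}_{n,n'}$, which is exactly the paper's \eqref{eq:lem2b}. Where you genuinely differ is the proof of that base case: the paper dismisses it with a one-line assertion (``shown by induction, first in $n'$ and then in $n$'') and gives no details, whereas you supply a complete argument via generating functions and the weighted Chebyshev sum inequality. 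Your version checks out: the rewrite $v^{0,0}_{n,n'}=\sqrt{\pi}\sum_k C_kC_{n-k}C_{n'-k}$ with $C_j=\binom{2j}{j}4^{-j}$ is correct (since $(k+1)_{-1/2}=\sqrt{\pi}C_k$ and $(j+\tfrac12)_{1/2}^{-1}=\sqrt{\pi}C_j$), the generating function $\sqrt{\pi}[(1-z)(1-w)(1-zw)]^{-1/2}$ is right, multiplication by $1-zw$ and coefficient extraction gives exactly the claimed equivalence, the $c_k$ in $1-\sqrt{1-z}$ are positive, the convolution recurrence $C_N=\sum_{k=1}^N c_k C_{N-k}$ follows from $\sqrt{1-z}\cdot(1-z)^{-1/2}=1$, and both $(C_{N-k})_k$ and $(C_{N'-k})_k$ are increasing in $k$ so Chebyshev applies. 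The payoff of your route is that it is fully explicit and self-contained, and it exposes a clean combinatorial structure (triple convolution of central binomials) that the paper's unstated double induction hides; the cost is that the generating-function machinery is arguably heavier than the direct induction the authors had in mind, and one must still know the weighted Chebyshev inequality.
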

\begin{proof}
 By \eqref{eq:pre2}
 $$ v^{m,0}_{n,n'} + v^{m,1}_{n,n'}\leq v^{0,0}_{n,n'} + v^{0,1}_{n,n'}$$
for all $m\geq0$. 

For negative $m$ 
$$
v^{m,0}_{n,n'}+v^{m,1}_{n,n'}\leq v^{-1,0}_{n,n'} + v^{-1,1}_{n,n'}
$$
where we use all three claims of Lemma \ref{pre}.
Thus, it suffices to show that
\begin{equation}
  \label{eq:lem2a}
   v^{-1,0}_{n,n'} + v^{-1,1}_{n,n'} \leq  v^{0,0}_{n,n'} + v^{0,1}_{n,n'}
\end{equation}
By \eqref{eq:v1} we have $v^{0,1}_{n,n'} = v^{1,0}_{n,n'}$ and
$v^{-1,1}_{n,n'} = v^{0,0}_{n+1,n'+1}$. Thus \eqref{eq:lem2a} is equivalent to
\begin{equation}
  \label{eq:lem2b}
  v^{0,0}_{n+1,n'+1}\leq v^{0,0}_{n,n'}.
\end{equation}
This is shown by induction, first in $n'$ and then in $n$.
\end{proof}
\begin{corollary}
  \label{cor2}
  For all $a\in l^2_0(\nz_0)$ we have
  $$\inf\{\cE_m[a] | a \in l^2_0(\nz_0)\} 
  \geq \inf\{ \cE_0[a] | a\in l^2_0(\nz_0) \}$$ and
  $$\inf\{(u,w_zu)| u\in \gq_0\}= \inf \cE_0[l^2_0(\nz_0)]. $$
\end{corollary}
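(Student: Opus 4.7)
The plan hinges on the pointwise-in-$m$ inequality $\cE_m[a]\ge \cE_0[|a|]$ valid for every $a\in l^2_0(\nz_0)$ and every $m\in\gz$; once this is in place, the first inequality of the corollary is immediate (since $|a|\in l^2_0(\nz_0)$), and the infimum identity will follow by invoking the orthogonal decomposition \eqref{eq:separariert}.

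To establish this key estimate, I would split $\cE_m$ into its kinetic and potential parts. The kinetic matrix $t^m$ is diagonal by \eqref{eq:t} with entries $t^m_{n,n}=2\sqrt{n+m_++1}\ge 2\sqrt{n+1}=t^0_{n,n}$, so trivially $(a,t^m a)=\sum_n t^m_{n,n}|a_n|^2\ge (|a|,t^0|a|)$. For the potential part, set $V^m_{n,n'}:=\tfrac12(v^{m,0}_{n,n'}+v^{m,1}_{n,n'})$; Lemma \ref{lemma2} then gives $0\le V^m_{n,n'}\le V^0_{n,n'}$ pointwise, and combining the nonnegativity of the $V^m_{n,n'}$ with the triangle inequality yields
\begin{equation*}
 (a,V^m a)\le \sum_{n,n'}|a_n|\,V^m_{n,n'}\,|a_{n'}|\le (|a|,V^0|a|).
\end{equation*}
Since $Z\ge 0$, subtracting $Z$ times this chain from the kinetic bound produces exactly $\cE_m[a]\ge\cE_0[|a|]$.

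For the infimum identity, the bound $\inf_u(u,w_Zu)\le\inf_a\cE_0[a]$ comes by restricting to vectors of pure angular momentum zero: for $u=\sum_n a_n\f 0n\in\gq_0$, the decomposition \eqref{eq:separariert} collapses to $(u,w_Zu)=\cE_0[a]$. For the reverse inequality I would combine \eqref{eq:separariert} with the key estimate,
\begin{equation*}
 (u,w_Zu)=\sum_{m\in\gz}\cE_m[a^m]\ge\sum_{m\in\gz}\cE_0[|a^m|],
\end{equation*}
where only finitely many $a^m$ are nonzero because $u\in\gq_0$, so the right-hand side is a finite sum. To conclude, observe that the absolute infimum of a quadratic form on $l^2_0(\nz_0)$ is either $0$ (when $\cE_0\ge 0$ identically) or $-\infty$ (by the $a\mapsto t a$ scaling, as soon as $\cE_0[a_0]<0$ for some $a_0$). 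In the first case every summand is nonnegative, giving $(u,w_Zu)\ge 0=\inf\cE_0$; in the second, the estimate $\inf_u(u,w_Zu)\ge -\infty$ is trivially satisfied and already matched by the ``$\le$'' direction.

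The one point requiring genuine care is the triangle-inequality passage from $a$ to $|a|$ in the potential estimate, which relies crucially on the positivity of \emph{every} matrix element $v^{m,i}_{n,n'}$ asserted in Lemma \ref{pre}. Without this feature, the pointwise monotonicity in $m$ supplied by Lemma \ref{lemma2} could not be converted into a quadratic-form inequality with the sign of the Coulomb term under control. Beyond this observation, the remainder of the argument is routine bookkeeping.
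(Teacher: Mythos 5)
Your proof is correct and follows essentially the same route as the paper: reduce to nonnegative sequences via the positivity of all potential matrix elements (Lemma \ref{pre}), then apply the $m$-monotonicity of Lemma \ref{lemma2} to compare $\cE_m$ with $\cE_0$. The scaling dichotomy $\inf\cE_0\in\{0,-\infty\}$ that you spell out to conclude the equality $\inf(u,w_zu)=\inf\cE_0$ is left implicit in the paper's terse final sentence, but it is the right observation and your making it explicit is an improvement rather than a departure.
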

\begin{proof}
  Since the kinetic energy $\sum_n t^m_{n,n} |a_n|$ is obviously
  invariant under the substitution $a \rightarrow |a|$ and since the
  potential energy
  $$ - Z \sum_{n,n'} \overline{a_n} {v^{m,0}_{n,n'}+v^{m,1}_{n,n'}\over2} a_{n'}$$
  decreases by the same substitution because of the positivity of the
  potential matrix elements (Lemma \ref{pre}, Formula \eqref{eq:pre1}),
  it suffices to take the infimum over positive sequences $a\in
  l^2_0(\nz_0)$ only. Thus, the desired inequalities follow from the
  corresponding inequalities of the matrix elements \eqref{eq:11}.
\end{proof}

\section{Proof of the theorem \label{sec:proof}}

\begin{proof} (Theorem \ref{theorem1}) By Corollary \ref{cor1} it
  is enough to study $w_Z$. By Corollary \ref{cor2}, this is
  equivalent to show lower boundedness of the quadratic form $\cE_0$
  on non-negative sequences $a\in l^2(\nz_0)$.

  At this point we embark on a strategy which goes back to Abel --
  at least -- and which has been introduced in relativistic quantum
  mechanics by Lieb and Yau \cite{LiebYau1988}); it basically consists
  of estimating a non-diagonal operator by a diagonal one using the
  Schwarz inequality suitably. We will apply it to the two potential
  matrices $v^0$ and $v^1$ with matrix elements $v^{0,0}_{n,n'}$ and
  $v^{0,1}_{n,n'}$ (\ref{eq:9}). (For the matrix elements we will,
  from now on, suppress the reference to $m=0$ as well and write
  simply $v^\sigma_{n,n'}$, $\sigma\in\{0,1\}$.)  Given any sequence
  $(g_n)_{n\in\nz_0}$ with positive entries this strategy suggests
  estimating as follows:
  \begin{equation}
    \label{eq:12}
    (a,v^\sigma a)  
    \le \sum_{n=0}^\infty {a_n^2 \over g_n} \sum_{n'=0}^\infty \v\sigma n{n'}g_{n'},
  \end{equation}
  where we use matrix notation on the left and that $v^\sigma$ is
  symmetric. (Note that we suppress an index $\sigma$ with $g$
  although $g$ can -- and will -- depend on $\sigma$.)

  We start with the case $\sigma=0$ and obtain 
  \begin{equation}
    \label{eq:13}
    \begin{split}
      (a, v^0 a) = & \frac1{\pi}\sum_{n,n'=0}^\infty a_na_{n'}
      \sum_{k=0}^{\min\{n,n'\}}{(k+1)_{-\eh}\over(n-k+\eh)_\eh(n'-k+\eh)_\eh}\\
      & \leq \frac1{\pi} \sum_{n=0}^\infty \frac{a_n^2}{g_n}
      \sum_{k=0}^n {(k+1)_{-\eh}\over(n-k+\eh)_\eh} \sum_{n'=0}^\infty
      \frac1{(n'+\eh)_\eh} g_{n'+k}
    \end{split}
  \end{equation}
  using \eqref{eq:12} and substituting $n'\rightarrow n'+k$.  We pick
  for $\sigma=0$
  \begin{equation}
    \label{g}
    g_n= \frac1{(n+\frac14)_{\frac34}}.
  \end{equation}
  This allows to explicitly do the summation in $n'$ and $k$ which gives
  \begin{equation}
    \label{eq:16}
    (a, v^0 a)  \leq {\Gamma(\frac14)^4\over2\pi^2}
    \sum_{n=0}^\infty  {\Gamma(n+\tfrac34)\over\Gamma(n+\tfrac14)} a_n^2.
  \end{equation}
  We apply Gautschi's inequality \eqref{eq:15} for $n\in\nz$ and we get 
  \begin{equation}
    \label{eq:o0}
    {\Gamma(n+\tfrac34)\over\Gamma(n+\tfrac14)}\leq \sqrt{n+\tfrac34}<\sqrt{n+1}
   \end{equation}
   which is also true for $n=0$ by inspection. Thus,
   \begin{equation}
     \label{eq:o00}
     (a,v^0 a) \leq  {\Gamma(\frac14)^4\over2\pi^2} \sum_{n=0}^\infty\sqrt{n+1} a_n^2.
   \end{equation}

   It remains to treat the case $\sigma=1$. We use again (\ref{eq:12})
   and obtain
  \begin{equation}
    \label{eq:131}
    \begin{split}
      (a, v^1 a) = & \frac1{\pi}\sum_{n,n'=0}^\infty {a_na_{n'}\over\sqrt{(n+1)(n'+1)}}
      \sum_{k=0}^{\min\{n,n'\}}{(k+1)_\eh\over(n-k+\eh)_\eh(n'-k+\eh)_\eh}\\
      \leq &\frac1{\pi} \sum_{n=0}^\infty \frac{a_n^2}{g_n\sqrt{n+1}}
      \sum_{k=0}^n {(k+1)_\eh\over(n-k+\eh)_\eh} \sum_{n'=0}^\infty
      { g_{n'+k}\over\sqrt{n'+k+1}(n'+\eh)_\eh}
    \end{split}
  \end{equation}
  substituting $n'\rightarrow n'+k$. In this case we pick
  \begin{equation}
    \label{eq:g1}
    g_n:= {\sqrt{n+1}\over (n+\tfrac34)_{\frac54}}
  \end{equation}
  which again allows for explicit summation in $n'$ and $k$ yielding
  \begin{equation}
    \label{eq:161}
    (a,v^1a)\leq{32\pi^2\over\Gamma(\frac14)^4}\sum_{n=0}^\infty {\Gamma(n+\tfrac54)\over\Gamma(n+\tfrac34)} a_n^2.
  \end{equation}

  For $n\geq2$ we have 
  \begin{equation}
    (n-\tfrac14)^{-1/2} < {\sqrt{n+1}\over n+\tfrac14}
  \end{equation}
  By Gautschi's inequality \eqref{eq:15} the left hand side majorizes
  $\Gamma(n+\tfrac14)/\Gamma(n+\tfrac34)$. Thus, by the Gamma function's
  functional equation we get
  \begin{equation}
    \label{eq:1o}
    {\Gamma(n+\tfrac54)\over\Gamma(n+\tfrac34)}< \sqrt{n+1}.
  \end{equation}
  However, this inequality is also true for $n=0$ and $n=1$ by inspection. Thus,
  \begin{equation}
    \label{eq:011}
    (a,v^1a)< {32\pi^2\over\Gamma(\frac14)^4}\sum_{n=0}^\infty \sqrt{n+1} a_n^2.
  \end{equation}

  Putting all together we have
  \begin{equation}
    \label{ab}
    \cE_0[a]\geq \sum_{n=0}^\infty 2 \sqrt{n+1}(1- Z/Z_c) a_n^2 \geq 0
  \end{equation}
  for $Z\leq Z_c$. Note that the first inequality in \eqref{ab} is
  indeed strict unless $a=0$ because of \eqref{eq:1o}. This shows the
  positivity of the form and therefore the first part of the theorem.

  This shows that
  \begin{equation}
    \label{qq}
    (u,v)_\gq := \sum_{n\in\nz_0,\ m\in\gz} \left((u,w_zv)
    +(u,v)\right)
  \end{equation}
  is a scalar product on $\gq_0$ and $(\psi,\varphi)_{\gQ_0} := (\Phi
  u,\Phi v)_\gQ$ is a scalar product on $\gQ_0$. The completions which
  we denote by $\gq$ and $\gQ$ are subspaces of $L^2(\rz^2)$ and $\gH$
  respectively. The quadratic form $(\psi,W_Z\psi)$ naturally extend to
  $\gQ$ and yields the self-adjoint Hamiltonian $B_Z$.

  For completenes we note that for $Z<Z_c$ Equation $\eqref{ab}$ shows
  that the norm $\|\cdot\|_\gq$ is equivalent to the ``Sobolev'' type
  norm $\|u\|_{W_0} := \sum_{m,n} (\sqrt{n+m_+} +1) |(\f mn,u)|^2$.

  To prove the claimed unboundedness we pick a family of trial
  sequences $a$ depending on an integer $N\in \nz$ -- for readability
  we refrain from indicating this explicitly -- given by
  \begin{equation}
    \label{eq:versuch}
    a_n:=
    \begin{cases}
      (n+1)^{-3/4} & \text{ if } n\leq N\\
      0 & \text{ if } n>N.
    \end{cases}
  \end{equation}
  We compute the expectation of the two summands $v^\sigma$,
  $\sigma\in\{0,1\}$, of the potential energy and obtain
  \begin{equation}
    \label{u}
    \begin{split}
    (a,v^\sigma a)= &\frac1\pi \sum_{k=0}^N (k+\sigma+1)_{-\eh}
    \left(\sum_{n=1}^{N-k+1}{(n)_{-\eh}\over (n+k)^{\frac34+\frac\sigma2}}\right)^2\\
    = &\frac1\pi\sum_{k=1}^N (k+\sigma+1)_{-\eh}\left(\int_0^\infty{\rd
        n\over
        (n+k)^{\frac34+\frac\sigma2}n^{\frac12}}\right)^2 +O(N^0) \\
    = &(\tfrac34+\tfrac\sigma2)_{-\eh}^2 \log(N) + O(N^0)  
    \end{split}
    \end{equation}
    for large  $N$. Thus
    \begin{equation}
      \cE_0[a] = (a,ta) -  Z(a,{v^0 +v^1\over2}a) 
      = 2(1- Z/Z_c) \log(N) + O(N^0),
      \end{equation}
      i.e., the form is unbounded from below for $Z>Z_c$.
\end{proof}

  \emph{Acknowledgment:} We acknowledge an enlightening discussion
  with Rupert Frank and Simone Warzel. We thank Heinrich K\"uttler for
  directing our attention to reference \cite{Gautschi1959}. We also
  thank Reinhod Egger und Alessandro De Martino for critical reading of
  the manuscript. Partial support by the DFG through the SFB-TR 12 is
  gratefully acknowledged.

\appendix

\section{Useful Facts on the Weyl Operator with Homogeneous Magnetic
  Field\label{a1}}

For the convenience of the reader and for fixing the notation we
collect in this appendix some facts related to the Weyl operator $W_0$
with homogeneous magnetic field.

We write $L_n^\alpha(x)$ for the $n$-th generalized Laguerre
polynomial with parameter $\alpha$ (Hochstrasser \cite[Formula
22.2.12]{Hochstrasser1965}). For $m\in\gz$ and $n\in\nz_0$ this allows
to define the functions
\begin{equation}
 \label{eq:6}
 \f mn(z)= \sqrt{\frac{n!}{\pi (n+\vert m \vert)!}}
 \begin{cases}
   e^{-\frac12 z \overline{z}} z^{\vert m \vert} L^{\vert m \vert}_n
   (z \overline{z})
   & \text{if } m \ge 0\\
   e^{-\frac12 z \overline{z}} \overline{z}^{\vert m \vert} L_n^{\vert
     m \vert} (z \overline{z}) & \text{if } m < 0.
 \end{cases}
\end{equation}
In polar coordinates $z=r\exp(\ri \varphi)$ these functions are written as
\begin{equation}
  \label{eq:a2}
  \f mn(r,\varphi)= \sqrt{\frac{n!}{\pi (n+\vert m \vert)!}} 
  e^{-\frac12r^2} r^{\vert m \vert} L^{\vert m \vert}_n (r^2)e^{\ri m\varphi}
\end{equation}
where -- in abuse of notation -- we use the same notation despite the
change of coordinates. Note that these functions form an orthonormal
basis of $L^2(\rz^2)$ which follows from the fact the
$(2\pi)^{-1/2}\exp(\ri m \varphi)$ are an orthonormal basis of
$L^2(0,2\pi)$ and for every fixed $m\in\nz_0$ the generalized Laguerre
polynomials $L^m_n$, $n\in\nz_0$, under suitable renormalization, are
an orthonormal basis of $L^2((0,\infty), r^me^{-r} \rd r)$ (Hewitt
\cite{Hewitt1954}). 

Using the recursion relations \cite[22.7.29-32]{Hochstrasser1965} of
the generalized Laguerre polynomials $L^m_n$ and
${L^m_{n}}'(x)=-L^{m+1}_{n-1}$ which is immediate from the
definition, we have
\begin{align}
  \label{eigen1}
  d^*\f mn&= 2 \ri \sgn(m) \sqrt{n+m_++1} \f{m+1}{n+\theta(-m)}\\
  \label{eigen2}d\f mn& =  -2\ri \sgn(m-1)\sqrt{n+m_+}\f {m-1}{n-\theta(-(m-1))}\\
  \label{eigen3}dd^* \f mn &= 4(n+m_++1) \f mn\\
  \label{eigen4}d^*|d^*|^{-1} \f mn & = \ri  \sgn(m) \f{m+1}{n+\theta(-m)}
\end{align}
for $ n\in \nz,\ m\in\gz$,  where -- as usual -- $m_+:=\max\{0,m\}$,
$$\theta(x):=
\begin{cases}
  1 &\text{if } x>0\\
  0 &\text{if }x\leq 0
\end{cases}
\text{and } \sgn(x):=
\begin{cases}
  1 &\text{if } x\geq0\\
  -1 &\text{if }x<0
\end{cases}
$$
(see also \cite[Section 7.1.3]{Thaller1992}). Note that this solution
is related to the non-relativistic Schr\"odinger equation with
homogeneous magnetic field in two dimensions (Fock
\cite{Fock1928}).

The angular momentum operator $L$ is given as
\begin{equation}
  \label{eq:drehimp}
  L:= x_1p_2-x_2p_1 = z\partial-\bar z\bar\partial = \frac 1\ri \partial_\varphi.
\end{equation}
Writing the $\f mn$ in spherical coordinates easily shows that they
are eigenfunctions of $L$ with eigenvalue $m$, i.e.,
\begin{equation}
  \label{eq:dreheigen}
  L\f mn = m \f mn.
\end{equation}

Since the $\f mn$ form an orthonormal basis, the
eigenvalue equation \eqref{eigen} implies that $|d^*|$ is invertible and
$d^*|d^*|^{-1}$ is an isometric operator. The spinors
\begin{equation}
 \label{eq:5}
 \p mn = \frac1{\sqrt{2}}
 \begin{pmatrix}
   \f mn\\
   d^* |d^*|^{-1} \f mn
 \end{pmatrix}
\end{equation}
for $n\in\nz_0$ and $m\in\gz$ form an orthonormal basis of $\gH$ as shown in
the proof of Lemma \ref{lemma3}.

Using \eqref{eigen1} through \eqref{eigen4} we find
\begin{equation}\label{eigen}
W_0\p mn = 2\sqrt{n+m_++1} \p mn,\ n\in \nz,\ m\in\gz,
\end{equation}
i.e, for fixed $m\in\gz$, the spinor $\p mn$ is the $n$-th eigenvector of
$W_0$ on the positive spectral subspace $\gH$.

The total angular momentum operator $J$ on $L^2(\rz^2)$ is given as
\begin{equation}
  \label{eq:drehimpg}
  J = L + 
  \frac12\begin{pmatrix}
    1&0\\
    0&-1
  \end{pmatrix}.
\end{equation}
The Formulae \eqref{eigen4}, \eqref{eq:dreheigen}, and \eqref{eq:5} imply
\begin{equation}
  \label{dreheigeng}
  J\p mn = (m+1/2) \p mn.
\end{equation}
In fact, 
\begin{equation}
  \label{eq:dreh}
   [W_0,J]=0 \text{ and } [w_0,L]=0.
\end{equation}

Eventually note, that $ \Phi^* J \Phi = L +1/2$ where $\Phi$ is the
unitary map defined Lemma \ref{lemma3}. This is the reason why it is
equally natural to label the basis by the orbital angular momentum
quantum number $m$ as to label it by the total angular momentum
quantum. We choose $m$ since the formulae are easier to handle.

\section{Some Useful Facts on Gamma and Related Functions\label{a2}}

The Gamma function 
$$ \Gamma(z):=\int_0^\infty z^te^{-t}{\rd t\over t}$$
is obviously positive on the positive half axis $\rz_+$ where it is also
analytic and log-convex (see, e.g., Rudin \cite[Theorem 8.18]{Rudin1976}).

A useful combination of Gamma functions is the
Pochhammer symbol
\begin{equation}
  \label{eq:poch}
  (z)_a:= {\Gamma(z+a)\over \Gamma(z)}
\end{equation}
which is a meromorphic function in both variables $z$ and $a$.

\begin{lemma}[Gautschi {\cite[Formulae 6 and 7]{Gautschi1959}}]
  \label{gautschi}
  For $x\in\rz_+$ and $0\le s\le 1$
  \begin{equation}
    \label{eq:15}
    (x+1)^{s-1}\leq \frac{\Gamma(x+s)}{\Gamma(x+1)} < x^{s-1}.
  \end{equation}
  \end{lemma}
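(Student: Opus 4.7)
The plan is to derive both halves of \eqref{eq:15} from the log-convexity of $\Gamma$ on $\rz_+$ (invoked just above the lemma statement), i.e., the fact that for $a,b>0$ and $\lambda\in[0,1]$,
\[
\log\Gamma(\lambda a+(1-\lambda)b)\le \lambda\log\Gamma(a)+(1-\lambda)\log\Gamma(b),
\]
with strict inequality when $a\ne b$ and $0<\lambda<1$ (strict log-convexity follows from $\Gamma$ being analytic on $\rz_+$ and not log-affine there). The idea is simply to pick, for each of the two bounds, a convex combination of two points at which $\Gamma$ differs by an explicit factor via the functional equation $\Gamma(t+1)=t\Gamma(t)$.

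For the upper bound I would write $x+s=(1-s)\cdot x+s\cdot(x+1)$. Log-convexity gives
\[
\log\Gamma(x+s)\le(1-s)\log\Gamma(x)+s\log\Gamma(x+1)=\log\Gamma(x)+s\log x,
\]
i.e., $\Gamma(x+s)\le x^s\Gamma(x)$. Dividing by $\Gamma(x+1)=x\Gamma(x)$ yields $\Gamma(x+s)/\Gamma(x+1)\le x^{s-1}$, and for $0<s<1$ the inequality is strict since $x\ne x+1$.

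For the lower bound I would write $x+1=s(x+s)+(1-s)(x+s+1)$. Log-convexity then gives
\[
\log\Gamma(x+1)\le s\log\Gamma(x+s)+(1-s)\log\Gamma(x+s+1)=\log\Gamma(x+s)+(1-s)\log(x+s),
\]
so $\Gamma(x+s)/\Gamma(x+1)\ge(x+s)^{s-1}$. Since $s-1\le0$ and $x+s\le x+1$, the function $t\mapsto t^{s-1}$ is non-increasing, hence $(x+s)^{s-1}\ge(x+1)^{s-1}$, which is the claimed lower bound.

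There is essentially no obstacle: the entire argument is two applications of Jensen's inequality in disguise. The only real craftsmanship is choosing the right convex combination in each direction, and noting that strictness of the upper inequality requires $0<s<1$ (at $s=0$ and $s=1$ both sides coincide, but those endpoints are not needed in the applications \eqref{eq:o0} and \eqref{eq:1o}, where $s=\tfrac12$).
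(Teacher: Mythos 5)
Your proof is correct. Note, however, that the paper does not prove this lemma at all: it cites Gautschi's Formulae 6 and 7 and only remarks that his argument, stated for integer $x$, remains valid for $x\in\rz_+$. Your route is therefore a genuine addition rather than a variant of an in-text proof: the two convex decompositions $x+s=(1-s)x+s(x+1)$ and $x+1=s(x+s)+(1-s)(x+s+1)$, combined with the functional equation $\Gamma(t+1)=t\Gamma(t)$, give the classical Wendel-type derivation of Gautschi's bounds from log-convexity, and it is self-contained using only facts the paper's Appendix \ref{a2} already records (positivity, analyticity and log-convexity of $\Gamma$ on $\rz_+$, and the functional equation). It also yields the real-$x$ statement directly, which is exactly the extension the paper asserts but does not spell out. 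One small caveat, which you flag correctly yourself: the strict inequality on the right of \eqref{eq:15}, as stated for all $0\le s\le 1$, degenerates to equality at $s=0$ and $s=1$; your argument gives strictness precisely for $0<s<1$, which covers the only case the paper uses ($s=\tfrac12$ in \eqref{eq:o0} and \eqref{eq:1o}), so nothing needed is lost. Your appeal to strict log-convexity (convex, analytic, and not log-affine, hence strictly convex) is a legitimate justification, though one could also simply note $(\log\Gamma)''(x)=\sum_{n\ge0}(x+n)^{-2}>0$.
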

Note that Gautschi claims the inequalities for $x\in \nz$
only. However, his proof is valid also for $x\in\rz_+$.

Furthermore, we note the reflection formula {\cite[Formula
  6.1.17]{Davis1965}} which states that for $0<\Re z<1$
  \begin{equation}
    \label{eq:refl}
    \Gamma(z)\Gamma(1-z)= {\pi\over \sin(\pi z)}.
  \end{equation}


\end{document}